\newcommand{\bi}{\begin{itemize}}
\newcommand{\ei}{\end{itemize}}
\newcommand{\ben}{\begin{enumerate}}
\newcommand{\een}{\end{enumerate}}
\newtheorem{thm}{Theorem}
\newtheorem{lemma}{Lemma}
\newtheorem{prop}{Proposition}
\newtheorem{defn}{Definition}
\newcommand{\by} {\boldsymbol{y}}
\newcommand{\bh} {\boldsymbol{h}}
\newcommand{\bw} {\boldsymbol{w}}
\newcommand{\gl}{\lambda}
\newcommand{\bxi} {\boldsymbol{\xi}}
\def\bal#1\eal{\begin{align}#1\end{align}}
\newcommand{\bp} {\begin{proof}}
\newcommand{\ep} {\end{proof}}
\newcommand{{\bRF}} {\right\}}
\begin{document}

\title{Favorable Propagation for Massive MIMO with Circular and Cylindrical Antenna Arrays}

\author{Elham Anarakifirooz, Sergey Loyka

\vspace*{-1\baselineskip}

\thanks{E. Anarakifirooz and S. Loyka are with the School of Electrical Engineering and Computer Science, University of Ottawa, Ontario, Canada, e-mail: sergey.loyka@ieee.org}
\thanks{This is an extended version of the paper accepted by IEEE Wireless Comm. Letters, Nov. 2021.}
}

\maketitle

\vspace*{-1\baselineskip}
\begin{abstract}
Massive MIMO systems with uniform circular and cylindrical antenna arrays are studied. Favorable propagation property is rigorously shown to hold asymptotically in LOS environment for a fixed antenna spacing and under some mild conditions. The analysis is based on a novel method using a Bessel function expansion, a new bounding technique and a simplified representation of inter-user interference for array geometries obeying Kronecker structure.
\end{abstract}

\begin{IEEEkeywords}
    Massive MIMO, favorable propagation, uniform circular array, uniform cylindrical array
\end{IEEEkeywords}

\vspace*{-0.5\baselineskip}
\section{Introduction}

Massive MIMO systems have recently gained significant attention and are considered to be a key technology for 5G and beyond, due to their promise of high spectral and energy efficiencies and simplified processing in multi-user environments. These promises rely to a significant degree on mutual orthogonality of channels to different users, which improves as the number of base station (BS) antennas grows. This is known as "favorable propagation" (FP) \cite{Marzetta-16}\cite{Ngo-13}.

FP was studied both theoretically \cite{Marzetta-16}-\cite{Masouros-15} and experimentally \cite{Hoydis-12}-\cite{Gauger-15}. While theoretical studies rely on some simplifying assumptions, such as i.i.d. Rayleigh fading or free-space propagation, measurement-based studies do take into account many practically-important factors, such as real-world wave propagation, antenna array geometry/design, etc. Despite the fact that FP never holds exactly in practice (i.e. channels of different users are never exactly orthogonal to each other), measurement-based studies show that using a large number of antennas allows one to recover a significant portion of theoretically-established benefits of massive MIMO systems in various real-world environments \cite{Hoydis-12}-\cite{Gauger-15}, even with reasonably-large number of antennas, since user orthogonality improves as the number of antennas increases.

It was shown via analysis, based on the law of large numbers, that FP is approached asymptotically (as the number of antennas grows) in i.i.d. fading channels \cite{Telatar-95}\cite{Marzetta-16}\cite{Ngo-13}. The i.i.d. assumption, however, disregards the impact of antenna array geometry and its constraints (such as its size). Yet, antenna array geometry is known to have a significant impact on MIMO system performance, especially when scattering is not rich enough. This impact was studied in \cite{Ngo-14}, where it was shown that FP holds asymptotically for uniform linear arrays (ULA) of fixed antenna spacing, for both fixed and uniformly-random users' angles-of-arrival (AoA) under line-of-sight (LOS) propagation. When the ULA size is fixed (rather than antenna spacing), FP does not hold anymore under LOS propagation and uniformly-random users' AoAs \cite{Masouros-15}.

It is concluded in \cite{Chen-13} that FP holds asymptotically for ULA and uniform planar arrays (UPA) but not for uniform circular arrays (UCA), all under LOS propagation. However, this comparison is not fair since ULA and UPA are analyzed under fixed antenna spacing (so their size grows unbounded with the number of antennas) while UCA is analyzed under fixed array size. Whether FP holds or not for UCA of fixed antenna spacing remains unknown. The present paper settles this open problem. We also consider a uniform cylindrical array (UCLA), which is a popular array geometry for practical implementations \cite{Hoydis-12}-\cite{Gunnarsson-20} but for which FP has not been analyzed yet. It should be emphasized that the analysis of UCA/UCLA is more challenging than that of ULA/UPA since, unlike the latter case, no closed-form expressions for array patterns are available in the former case and thus new tools are need to analyze their FP properties.

Since UCA and UCLA are attractive solutions for 5G/6G cellular applications, we analyze them here under fixed antenna spacing and LOS propagation. The latter is motivated by mmWave/THz systems (another key technology for 5/6G and beyond), where a large number of antennas can be located in a limited space but where LOS environment is essential to maintain a proper SNR (since a significant LOS blockage results in a large SNR loss resulting in link outage) \cite{Larsson-18}-\cite{Rappaport-19}. Furthermore, LOS environment is considered to be "particularly difficult" for users' orthogonality and massive MIMO performance \cite{Gao-15}. Therefore, it is important to know whether mmWave/THz systems can still exploit fully the advantages of massive MIMO in this environment.

It should be pointed  out that the method of \cite{Chen-13}, which was used for the UCA of a fixed size, is not applicable when the antenna spacing is fixed, since the respective sum does not converge to an integral under fixed spacing. Furthermore, the law of large numbers cannot be used either in deterministic setting considered here. Therefore, a new approach is needed. To this end, we propose such an approach based on a Bessel series representation. By carefully bounding each term of this series, we rigorously demonstrate that FP does hold asymptotically for UCA and UCLA of fixed antenna spacing under LOS propagation, see Theorem 1 and Propositions 1-3. This is in line with the ULA and UPA geometries, even though the analytical tools used here for the UCA/UCLA geometries are completely different from those of the ULA/UPA. Based on the asymptotic analysis, a condition on the number of antennas to approach the FP closely is given. In the process, we establish a new bound on Bessel functions and a new inter-user interference (IUI) representation for array geometries obeying Kronecker structure (such as UCLA or UPA), which significantly simplifies the analysis. The reported results establish point-wise convergence to orthogonality (i.e. for any given user) and not only statistically (i.e. "on average"), as in e.g. \cite{Marzetta-16}\cite{Ngo-13}\cite{Ngo-14}\cite{Masouros-15}\cite{Telatar-95}\cite{Loyka-19}. Statistical convergence follows from point-wise one under some mild assumptions on users' AoA distribution.

\section{Channel Model}
\label{sec.Channel Model}

Let us consider the uplink of a Gaussian MIMO channel, where $M$ independent single-antenna users transmit data simultaneously to a base station (BS) equipped with $N$-element antenna array:
\vspace*{-.5\baselineskip}
\bal
\label{eq.ch.model}
\by = \bh_1 x_1 + \sum_{i=2}^{M} \bh_i x_i + \bxi
\eal
where $\bh_i,\ x_i$ are the channel vector and information-bearing signal of $i$-th user, $i=1,...,M$; $\by,\ \bxi$ are the received BS signal and noise vectors, respectively; $|\bh|$, $\bh'$ and $\bh^+$ denote Euclidean norm (length), transposition and Hermitian conjugation, respectively, of vector $\bh$. The noise is Gaussian circularly-symmetric, of zero mean and variance $\sigma_0^2$ per Rx antenna. The channel vectors are assumed to be fixed, which corresponds to LOS-dominated propagation, as in mmWave systems \cite{Rangan-14}\cite{Larsson-18}.

To simplify system design, the BS uses linear processing: while decoding user $1$ signal (the main user), it treats the other users' signal $\sum_{i=2}^{M} \bh_i x_i$ as interference and applies the matched filter beamforming $\bw= \bh_1/|\bh_1|$ tuned to the main user so that its SINR can be expressed as follows:
\bal
\label{eq.SINR}
\mbox{SINR} &=\frac{|\bh_1|^2\sigma_{x_1}^2}{|\bh_1|^{-2}\sum_{i=2}^{M}|\bh_1^+\bh_i|^2\sigma_{x_i}^2 + \sigma_0^2}\\
&= \frac{\gamma_1}{\sum_{i=2}^{M}|\alpha_{iN}|^2\gamma_i + 1},\quad \alpha_{iN}=\bh_1^+\bh_i/N
\eal
where $\sigma_{xi}^2$ and $\gamma_i = |\bh_i|^2\sigma_{xi}^2/\sigma_0^2$ are the Tx signal power and the Rx SNR of user $i$; the channel is normalized so that $|\bh_i|^2=N$ (the propagation path loss is absorbed into the Rx SNR $\gamma_i$). Note that the matched-filter beamforming via $\bw$ is also known as (single-user) maximum ratio combining and $\gamma_1$ is its output SNR. $\sum_{i=2}^{M}|\alpha_{iN}|^2\gamma_i$ represent inter-user interference (IUI), which disappears if
\bal
|\alpha_{iN}|=0, \ i=2,...,M
\eal
where $|\alpha_{iN}|^2$ characterises IUI power "leakage" from user $i$ to the main user. Note that, in general,
\bal
\mbox{SINR} \le \gamma_1
\eal
i.e. the SINR cannot exceed the single-user SNR $\gamma_1$; the upper bound is attained and the two are equal
\bal
\mbox{SINR} = \gamma_1 \quad \mbox{if} \quad |\alpha_{N}|^2 =\sum_{i=2}^M |\alpha_{iN}|^2=0
\eal
This is the most favorable condition in terms of IUI, and one of the key motivations for massive MIMO systems. While $\alpha_N$ is never exactly zero in practice, it can approach zero as $N$ increases, which is known as (asymptotically) favorable propagation \cite{Marzetta-16}. This is made precise below.

\begin{defn}
Favorable propagation is said to hold if
\bal
\label{eq.FP}
\lim_{N\to\infty}|\alpha_{N}|=0
\eal
so that the inter-user interference asymptotically vanishes under bounded Rx SNR (i.e. can be made as low as desired provided the number of antennas is large enough).
\end{defn}

\section{FP for Uniform Circular Arrays}
\label{sec.FP.UCA}

In this section, we first analyze FP for a finite number of users with fixed and distinct AoA's and show that favorable propagation does hold for uniform circular arrays with any fixed element spacing $d>0$ under LOS propagation. Then, we briefly discuss the case of unbounded number of users and identify some scenarios where FP does not hold.

Note that the method of \cite{Chen-13}, which is based on the integral sum representation, cannot be used under fixed $d$ (since, in this case, the summation does not converge to an integral). Hence, new tools are needed. To this end, we will use the generating function of Bessel coefficients and establish novel upper bounds, from which the FP will follow without relying on the integral sum representation.

\subsection{Planar 2-D case}

Since the analysis is rather involved, we begin with the planar (2-D) case, when users are located in the plane of the UCA, see Fig. \ref{fig.UCA-2D} for an illustration, and then extend this to the full 3-D case.
\begin{figure}[t]
\centerline{\includegraphics[width=2in]{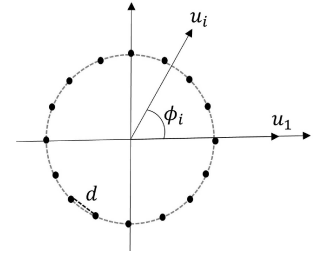}}
\vspace*{-.5\baselineskip}
\caption{A planar (2-D) case: users are located in the UCA plane.}
\label{fig.UCA-2D}
\end{figure}
In this 2-D case under LOS propagation in the far field, the normalized channel vectors can be expressed as\footnote{We omit here the common phase shift term $e^{j\varphi_i}$ since it does not affect FP; the LOS path loss is absorbed into the Rx signal power/SNR.}
\bal\notag
\bh_i &= [e^{j \psi_{0i}},..., e^{j \psi_{(N-1)i}}]',\ i=1...M\\
 \psi_{ni} &= 2\pi R_N\cos(\phi_i- \varphi_n),\quad   n=0...N-1
\eal
where $\varphi_n = 2\pi n/N$, $R_N = d|2\sin(\varphi_1/2)|^{-1}$ is the circle radius, $d$ is the element spacing (normalized to the wavelength), $\phi_i$ is the angle-of-arrival (AoA) of user $i$. The FP property for this array geometry is established below.

\begin{thm}
\label{thm.2D}
Favorable propagation holds for the UCA under LOS propagation, (any) fixed element spacing $d>0$ and any fixed number of users $M < \infty$ with  distinct AoA's, i.e.
\bal
\label{eq.FP-2D}
\phi_1 \neq \phi_i,\quad i=2...M, \quad \Leftrightarrow\quad \lim_{N\to\infty}|\alpha_{N}|=0
\eal
in the planar 2-D case.
\end{thm}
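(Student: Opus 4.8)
The plan is to reduce the claim to the asymptotic behavior of a single inner product $\alpha_{iN} = \bh_1^+\bh_i/N$ and show it tends to $0$ whenever $\phi_1 \neq \phi_i$; since $M$ is finite, $|\alpha_N|^2 = \sum_{i=2}^M |\alpha_{iN}|^2 \to 0$ follows immediately, and the reverse implication is trivial (if $\phi_1 = \phi_i$ then $\bh_1 = \bh_i$ and $\alpha_{iN} = 1$). Writing out the inner product, we get
\bal\notag
\alpha_{iN} = \frac{1}{N}\sum_{n=0}^{N-1} e^{j(\psi_{ni} - \psi_{n1})} = \frac{1}{N}\sum_{n=0}^{N-1} \exp\!\big(j 2\pi R_N[\cos(\phi_i - \varphi_n) - \cos(\phi_1 - \varphi_n)]\big).
\eal
Using the sum-to-product identity, the exponent becomes $j 2\pi R_N \cdot a_i \cos(\varphi_n - \theta_i)$ for suitable $a_i = a_i(\phi_1,\phi_i)$ and phase $\theta_i$, where $a_i > 0$ precisely when $\phi_1 \neq \phi_i$. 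The key point is that $R_N \to \infty$ linearly in $N$ (since $R_N = d/|2\sin(\pi/N)| \sim dN/(2\pi)$), so the "effective radius" $R_N a_i$ grows without bound.

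The core of the argument is the Bessel expansion advertised in the introduction: by the Jacobi–Anger / generating-function identity for Bessel coefficients,
\bal\notag
e^{j z \cos\beta} = \sum_{k=-\infty}^{\infty} j^k J_k(z)\, e^{jk\beta},
\eal
which, applied with $z = 2\pi R_N a_i$ and $\beta = \varphi_n - \theta_i = 2\pi n/N - \theta_i$, lets me swap the order of summation:
\bal\notag
\alpha_{iN} = \sum_{k=-\infty}^{\infty} j^k J_k(2\pi R_N a_i)\, e^{-jk\theta_i}\cdot \frac{1}{N}\sum_{n=0}^{N-1} e^{j 2\pi k n/N}.
\eal
The inner geometric sum is a Dirichlet kernel that equals $1$ when $N \mid k$ and $0$ otherwise, so only the terms $k = mN$, $m \in \mathbb{Z}$, survive:
\bal\notag
\alpha_{iN} = J_0(2\pi R_N a_i) + \sum_{m\neq 0} j^{mN} J_{mN}(2\pi R_N a_i)\, e^{-jmN\theta_i}.
\eal
This is the clean representation that replaces the integral-sum approach of \cite{Chen-13}. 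Now I need two ingredients. First, $J_0(2\pi R_N a_i) \to 0$: since the argument $\to \infty$ and $J_0(z) = O(z^{-1/2})$ for large real $z$, this term vanishes. Second — and this is where the new bound on Bessel functions comes in — I need the tail $\sum_{m\neq 0} |J_{mN}(2\pi R_N a_i)|$ to vanish as well. Here the relevant regime is $|mN|$ large compared to the argument $2\pi R_N a_i \approx d a_i |m N| / |m|$ times a bounded factor; more precisely, for $m \neq 0$ the order $|mN|$ exceeds the argument once $N$ is large enough (roughly when $|m| \geq 1$ and $d a_i < $ something, handled by the monotone growth), and in that "evanescent" regime $J_\nu(z)$ decays super-exponentially in $\nu$. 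A clean uniform bound of the form $|J_\nu(z)| \leq (z/2)^{|\nu|}/|\nu|!$ (valid for all real $z$, $\nu$ integer) or the sharper $|J_\nu(\nu \operatorname{sech}\mu)| \leq e^{-\nu(\mu - \tanh\mu)}$ would let me sum the tail as a convergent geometric-type series whose ratio $\to 0$.

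The main obstacle I anticipate is controlling the Bessel tail uniformly in $N$: the argument $2\pi R_N a_i$ and the order $mN$ both grow with $N$ at comparable (linear) rates, so one is in the delicate transition region of the Bessel function where neither the oscillatory asymptotics ($\nu \ll z$) nor the naive factorial bound ($\nu \gg z$) is immediately decisive. The resolution is that the ratio (argument)/(order) $= 2\pi R_N a_i / (mN) \to d a_i / |m| \cdot (\text{factor} \to 1)$, and as long as this limiting ratio is bounded away from $1$ — which one arranges by noting $R_N \sim dN/(2\pi)$ exactly, so the ratio tends to $d a_i/|m|$, and $a_i \leq 2$ while we may need a mild condition or a case split on whether $d a_i \geq 1$ — the Debye/uniform asymptotics give exponential decay in $N$. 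I would handle the possible case $d a_i \geq 1$ (large spacing) either by the general bound $|J_\nu(z)| \leq 1$ combined with a more careful count, or by invoking the paper's claimed "new bound on Bessel functions" which presumably is tailored exactly to make this step go through for all $d > 0$. Once the tail is shown to be $o(1)$, combining with $J_0 \to 0$ gives $\alpha_{iN} \to 0$, hence $|\alpha_N| \to 0$, completing the proof.
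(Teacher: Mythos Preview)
Your skeleton matches the paper exactly: Jacobi--Anger expansion, the Dirichlet kernel killing all modes except $k=mN$, and then bounding the surviving Bessel terms. The gap is in how you handle the terms $m\neq 0$. You argue that for $m\neq 0$ the order $|m|N$ eventually exceeds the argument, but since both scale linearly in $N$ their ratio tends to the constant $d a_i/|m|$; for any $d>0$ with $d a_i>1$ (e.g.\ $d>1/2$ and $\phi_i$ near $\pi$) there is a nonempty finite range $1\le |m|\le \lfloor d a_i\rfloor$ where the Bessel function stays in the oscillatory regime for \emph{all} $N$, so neither the factorial bound $(z/2)^{|\nu|}/|\nu|!$ nor Debye asymptotics give decay there. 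Your fallback ``$|J_\nu(z)|\le 1$ combined with a more careful count'' cannot work, since a finite sum of terms each bounded by $1$ does not tend to $0$.

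The paper closes this gap by splitting at $k_1=\lceil e\pi d\rceil$ rather than at $k=1$. For the finitely many terms $0\le k<k_1$ (which include $J_0$ and all the oscillatory ones) it uses the uniform Landau--Krasikov bound $|J_k(x)|\le |x|^{-1/3}$, valid for all orders and arguments, together with $|z_{Ni}|\ge 2Nd|\sin(\phi_i/2)|$, giving a contribution $\le 2k_1(2Nd|\sin(\phi_i/2)|)^{-1/3}\to 0$. For $k\ge k_1$ the ratio $|z_{Ni}|/(kN)\le \pi d/k_1\le 1/e<1$, and the paper's ``new bound'' is the Kapteyn-based inequality $|J_n(nx)|\le |ex/2|^n$ for $|x|\le 1$, which turns the tail into a geometric series with ratio at most $1/2$. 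So the missing ingredient in your argument is precisely the $|x|^{-1/3}$ bound applied to the fixed, $d$-dependent block of low-order terms; once you add that, your outline becomes the paper's proof.
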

\begin{proof}
See Appendix.
\end{proof}

It should be pointed out that Theorem 1 establishes that the FP holds point-wise, i.e. for any given user (which can be taken to be user 1, without loss of generality) under distinct AoAs, and not only statistically, as in the case of fading channels and randomly-located users. Using the point-wise convergence above, one can establish the FP for randomly-located users as well, provided that the AoA distribution function satisfies certain mild conditions.

It follows from the proof of Theorem 1 that $|\alpha_{iN}| \ll 1\ \forall i\ge 2$ , i.e. inter-user interference leakage is low, if
\bal
N d \gg \left(\min_{i\ge 2} |\sin(\phi_i/2)| \right)^{-1}
\eal
where we set $\phi_1=0$. This determines the minimum number of antennas for low IUI leakage. If the users' AoAs are distributed uniformly, then $\phi_2 = 2\pi/M$ is the minimizer and, for large $M$, this becomes
\bal
N \pi d \gg M
\eal
provided fixed $d$ is not too large, $e\pi d < M$.

If the number of users $M$ is allowed to grow unbounded with $N$, then the FP does not hold anymore in general (but may hold in some special cases). To see this, consider $M=N$ users with uniformly-distributed AoAs so that $\phi_i = 2\pi(i-1)/N,\ i=1...N$, and use the proof of Theorem 1 to show that, in this case,
\bal
\label{eq.alpN.e1}
\lim_{N\to \infty} \alpha_N \ge \lim_{N\to \infty} |\alpha_{2N}| = |J_0(2\pi d)|
\eal
and $J_0(2\pi d) \neq 0$ unless $2\pi d$ is a null of $J_0(x)$, where $J_0$ is Bessel function of the 1st kind and order 0. For example, $J_0(2\pi d)=J_0(\pi)\approx -0.3$ if $d=1/2$, so there is no FP here. \eqref{eq.alpN.e1} also holds for arbitrary AoAs distribution provided that the nearest user (to user 1) is at $2\pi/N$. FP may also not hold even for a finite and fixed $M$ if AoAs are not fixed but some of them are allowed to approach that of user 1 arbitrarily closely as $N$ grows. For example, let $M=2$ and $\phi_2 = 2\pi/N$ so that
\bal
\lim_{N\to \infty} \alpha_N =|J_0(2\pi d)| \neq 0
\eal
i.e. the FP does not hold. Detailed study of these settings is beyond the scope of this Letter.


\subsection{Extension to 3-D case}

Next, we consider the full 3-D case, see Fig. \ref{fig.UCA-3D}.

\begin{figure}[t]
\centerline{\includegraphics[width=2.5in]{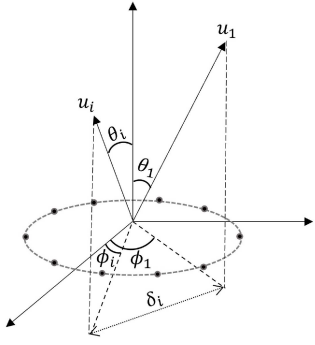}}
\caption{The full 3-D case geometry: users are not confined to the UCA plane.}
\label{fig.UCA-3D}
\end{figure}

\begin{prop}
\label{prop.3D}
Favorable propagation holds for the UCA in the full 3-D case under LOS propagation, for (any) fixed element spacing $d>0$, any fixed $M< \infty$, and any distinct and fixed AoA's if and only if neither of the following holds for any $i\ge 2$:

1. $\theta_{1,i} = 0$ or $\pi$

2. $\phi_1 = \phi_i$ and $\theta_1=\theta_i$ or $\theta_1=\pi-\theta_i$

\end{prop}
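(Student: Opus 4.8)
The plan is to reduce the 3-D problem to the planar result of Theorem~\ref{thm.2D} by folding the elevation angles into an effective circle radius, and then to read off the degenerate configurations as exactly those in which this effective radius collapses to zero. Since the UCA elements are coplanar, in the far field only the component of a user's wave vector parallel to the array plane is seen by the array; writing $\theta_i$ for the user's polar angle measured from the array axis (so that $\theta_i=\pi/2$ reproduces the planar model of Theorem~\ref{thm.2D}), the channel phases become $\psi_{ni}=2\pi R_N\sin\theta_i\cos(\phi_i-\varphi_n)$. Hence $\alpha_{iN}=N^{-1}\sum_{n=0}^{N-1}e^{j(\psi_{ni}-\psi_{n1})}$, and the key algebraic step is to merge the two sinusoids in the exponent into one,
\[
\sin\theta_i\cos(\phi_i-\varphi_n)-\sin\theta_1\cos(\phi_1-\varphi_n)=A_i\cos(\varphi_n-\chi_i),
\]
for a suitable phase $\chi_i$ and amplitude
\[
A_i=\sqrt{\sin^2\theta_i+\sin^2\theta_1-2\sin\theta_i\sin\theta_1\cos(\phi_i-\phi_1)} .
\]
This recasts $\alpha_{iN}$ as $N^{-1}\sum_{n=0}^{N-1}e^{j2\pi R_N A_i\cos(\varphi_n-\chi_i)}$, which has exactly the structure of the normalized exponential sums analyzed in the proof of Theorem~\ref{thm.2D}, with the positive constant $A_i$ playing the role of a rescaled element spacing: indeed $R_N A_i=dA_i/(2|\sin(\pi/N)|)$, i.e.\ it is the planar IUI coefficient of a UCA with fixed spacing $dA_i$ and AoA $\chi_i$.

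\emph{Direct part.} Suppose $A_i>0$ for every $i\ge2$. Then each $\alpha_{iN}$ is of the form handled in the proof of Theorem~\ref{thm.2D} with fixed element spacing $dA_i>0$ and effective AoA $\chi_i$; the Bessel (Jacobi--Anger) series expansion and the term-by-term bounds used there apply verbatim, since they use nothing about the AoA beyond it being real and nothing about the spacing beyond it being a fixed positive constant. Hence $\alpha_{iN}\to0$, and since $M<\infty$, $|\alpha_N|^2=\sum_{i=2}^M|\alpha_{iN}|^2\to0$, so FP holds.

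\emph{Converse and the degenerate cases.} If $A_i=0$ for some $i\ge2$, then $\psi_{ni}=\psi_{n1}$ for all $n$, so $\alpha_{iN}=1$ for every $N$ and $|\alpha_N|\not\to0$: FP fails. It remains to verify that $A_i=0$ is equivalent to one of the two excluded conditions holding for user $i$. Writing $A_i^2=(\sin\theta_i-\sin\theta_1\cos(\phi_i-\phi_1))^2+\sin^2\theta_1\sin^2(\phi_i-\phi_1)$ and using $\sin\theta\ge0$ on $[0,\pi]$, $A_i=0$ forces simultaneously $\sin\theta_1\sin(\phi_i-\phi_1)=0$ and $\sin\theta_i=\sin\theta_1\cos(\phi_i-\phi_1)$; a short case split on whether $\sin\theta_1=0$ or $\phi_i-\phi_1\in\{0,\pi\}$ yields exactly: either $\theta_1,\theta_i\in\{0,\pi\}$ (both users on the array axis, so all array phases vanish identically), or $\phi_i=\phi_1$ with $\theta_i\in\{\theta_1,\pi-\theta_1\}$ (the unavoidable front/back ambiguity of a planar array). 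These are precisely conditions 1 and 2, which completes the equivalence.

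I do not expect a genuine analytic obstacle: the heavy lifting is already contained in Theorem~\ref{thm.2D}, and the 3-D case needs only the observation that the elevation dependence collapses into a rescaled --- but still fixed and positive --- antenna spacing $dA_i$. The one point requiring care is to invoke the proof of Theorem~\ref{thm.2D} in adequate generality: for an arbitrary effective AoA $\chi_i$ (not necessarily a ``physical'' angle) and for the rescaled spacing $dA_i$ ranging over all positive reals. Since that proof imposes no restriction on the AoA and is valid for every fixed spacing, the reduction is clean, and the remaining steps (the trigonometric merging and the case analysis pinning down the degenerate geometries) are routine.
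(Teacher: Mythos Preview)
Your approach is essentially the same as the paper's: merge the two elevation-weighted sinusoids into a single one with amplitude $A_i$ (the paper's $\delta_i$), then reuse the Bessel expansion and the $A_{1N}/A_{2N}$ bounds from the proof of Theorem~\ref{thm.2D}, with FP failing exactly when $A_i=0$. Your converse and the case analysis pinning $A_i=0$ to conditions 1 and 2 are spelled out more carefully than in the paper; the only imprecision is the remark that the resulting sum is ``the planar IUI coefficient of a UCA with spacing $dA_i$ and AoA $\chi_i$'' --- the phase $\chi_i$ does not play the role of a planar AoA (after the merging in Theorem~\ref{thm.2D} the coefficient is $4\pi R_N'\sin(\phi'/2)$, not $2\pi R_N'$) --- but this heuristic does not affect the argument, since you correctly invoke the bounds of the proof rather than the statement of Theorem~\ref{thm.2D}.
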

\begin{proof}
Follows along the lines of that of Theorem \ref{thm.2D}, with proper modifications to accommodate 3-D setting. In particular, the channel vector $\bh_i$ of user $i$ can be expressed as
\bal\notag
\bh_i &= [e^{j \psi_{0i}},..., e^{j \psi_{(N-1)i}}]',\ i=1...M\\
 \psi_{ni} &= 2\pi R_N\sin(\theta_i)\cos(\phi_i- \varphi_n),\ n=0...N-1
\eal
where $\varphi_n = 2\pi n/N$; $\alpha_{iN}$ can be expressed as
\bal\notag
\alpha_{iN} &=\frac{1}{N}\sum_{n=0}^{N-1} e^{j 2\pi R_N(\delta_{1i} \cos \varphi_n +\delta_{2i} \sin \varphi_n)}\\
&=\frac{1}{N}\sum_{n=0}^{N-1} e^{j 2\pi R_N\delta_i\sin(\varphi_n +\beta_i)}\\ \notag
&= J_0(z_{Ni}) + \sum_{k=1}^{\infty} (e^{jkN\beta_i} + (-1)^{kN} e^{-jkN\beta_i}) J_{kN}(z_{Ni})
\eal
where $z_{Ni}= 2\pi R_N \delta_i$, $\delta_i= \sqrt{\delta_{1i}^2 + \delta_{2i}^2}$, $\beta_i= \arg(\delta_{2i}+j \delta_{1i})$,
\bal\notag
\delta_{1i}=\sin \theta_{i} \cos \phi_{i}-\sin \theta_{1} \cos \phi_{1} \\
\delta_{2i}=\sin \theta_{i} \sin \phi_{i}-\sin \theta_{1} \sin \phi_{1}
\eal
Now, $|\alpha_{iN}|$ can be upper bounded as follows:
\bal\notag
|\alpha_{iN}| &\le 2 \sum_{k=0}^{k_1-1} |J_{k N}(z_{Ni})| +2 \sum_{k=k_1}^{\infty} |J_{k N}(z_{Ni})|\\
&= A_{1N,i} + A_{2N,i}
\eal
Using \eqref{eq.A1N.UB} and \eqref{eq.A2N.UB}, the desired result follows provided $\delta_i \neq 0$, which gives the conditions of Proposition \ref{prop.3D}.
\end{proof}


\section{FP for Uniform Cylindrical Arrays}
\label{sec.FP.UCLA}

In this section, we consider a $(N_c\times N)$-element uniform cylindrical array (UCLA) consisting of $N_c$ UCAs located on top of each other, each with $N$ elements, as shown in Fig. \ref{fig.UCLA}.
\begin{figure}[t]
\centerline{\includegraphics[width=2.7in]{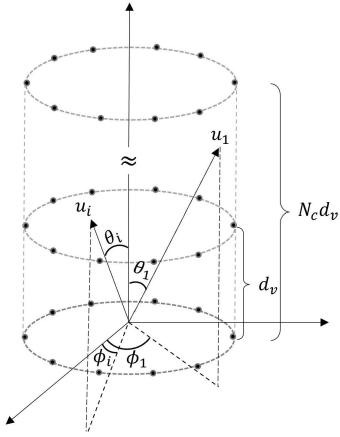}}
\caption{Uniform cylindrical array (UCLA) comprised of $N_c$ UCAs located on top of each other.}
\label{fig.UCLA}
\end{figure}
The aggregate channel vector $\bh_i$ of user $i$ can be expressed as follows:
\bal
\label{eq.hi.Kron}
\bh_i &= [\bh_{i1}',...,\bh_{iN_c}']' = \bh_{ih} \otimes \bh_{iv}
\eal
where $\bh_{im}$ is the channel vector of UCA $m$ for user $i$, $\otimes$ denotes Kronecker product, $\bh_{ih}=\bh_{i1}$ is the 1st UCA streeting vector, $\bh_{iv}$ is the steering vector of vertical $N_c$-element ULA whose entries $h_{ivm}$ are the respective phase shift terms (which account for phase shifts of UCA $m$ with respect to 1st UCA):
\bal
h_{ivm} = \exp\{j 2\pi (m-1) d_v \cos(\theta_i)\},\ m=1...N_c
\eal
The interference leakage factors from $i$-th to main user are denoted as follows:
\bal\notag
\alpha_i = \frac{1}{N_cN}\bh_1^+\bh_i,\ \alpha_{ih} = \frac{1}{N}\bh_{1h}^+\bh_{ih},\ \alpha_{iv} = \frac{1}{N_c}\bh_{1v}^+\bh_{iv}
\eal
i.e. $\alpha_{h}$ and $\alpha_{v}$ represent the respective factors for a single UCA and ULA (located in horizontal and vertical planes, respectively).
The following Proposition is instrumental in establishing FP for the UCLA.

\begin{prop}
Let $\bh_i$ have Kronecker structure as in \eqref{eq.hi.Kron}. Then, $\alpha_i$ can be expressed as follows:
\bal
\label{xiMN}
\alpha_i=\alpha_{ih} \alpha_{iv}
\eal
and, furthermore,
\bal
\label{eq.xiMN.ineq}
|\alpha_i|\le \min\{|\alpha_{ih}|, |\alpha_{iv}|\}
\eal
\begin{proof}
Observe the following:
\bal\notag
\alpha_i &= \bh_1^+\bh_i/(N_c N)\\ \notag
&= (\bh_{1h} \otimes \bh_{1v})^+ (\bh_{ih} \otimes \bh_{iv})/(N_c N)\\ \notag
&=(\bh_{1h}^+ \otimes \bh_{1v}^+) (\bh_{ih} \otimes \bh_{iv})/(N_c N)\\ 
&=(N_c^{-1} \bh_{1v}^+ \bh_{iv})(N^{-1} \bh_{1h}^+ \bh_{ih})=\alpha_{ih} \alpha_{iv}
\eal
where 3rd and 4th equalities are due to the properties of Kronecker products \cite{Magnus-99}. The inequality in \eqref{eq.xiMN.ineq} follows from $|\alpha_{h,v}| \le 1$.
\end{proof}
\end{prop}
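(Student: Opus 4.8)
The plan is to compute $\alpha_i$ directly from the Kronecker representation \eqref{eq.hi.Kron}, exploiting two standard identities for Kronecker products: the conjugate-transpose rule $(\bA\otimes\bB)^+ = \bA^+\otimes\bB^+$ and the mixed-product rule $(\bA\otimes\bB)(\bC\otimes\bD) = (\bA\bC)\otimes(\bB\bD)$, both valid whenever the ordinary matrix products are defined \cite{Magnus-99}.

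First I would substitute $\bh_1 = \bh_{1h}\otimes\bh_{1v}$ and $\bh_i = \bh_{ih}\otimes\bh_{iv}$ to obtain
\bal\notag
\alpha_i = \frac{\bh_1^+\bh_i}{N_cN} = \frac{(\bh_{1h}^+\otimes\bh_{1v}^+)(\bh_{ih}\otimes\bh_{iv})}{N_cN} = \frac{(\bh_{1h}^+\bh_{ih})\otimes(\bh_{1v}^+\bh_{iv})}{N_cN}.
\eal
The point of this step is that $\bh_{1h}^+\bh_{ih}$ and $\bh_{1v}^+\bh_{iv}$ are scalars ($1\times 1$ matrices), and the Kronecker product of scalars is ordinary multiplication; hence the right-hand side collapses to $\big(N^{-1}\bh_{1h}^+\bh_{ih}\big)\big(N_c^{-1}\bh_{1v}^+\bh_{iv}\big)=\alpha_{ih}\alpha_{iv}$, which proves \eqref{xiMN}.

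For \eqref{eq.xiMN.ineq} I would apply the Cauchy--Schwarz inequality to each factor: $|\alpha_{ih}| = |\bh_{1h}^+\bh_{ih}|/N \le |\bh_{1h}|\,|\bh_{ih}|/N = 1$ since each per-UCA steering vector has length $\sqrt{N}$, and likewise $|\alpha_{iv}|\le 1$ since each vertical steering vector has length $\sqrt{N_c}$. Therefore $|\alpha_i| = |\alpha_{ih}|\,|\alpha_{iv}|$ is bounded above by $|\alpha_{ih}|$ and by $|\alpha_{iv}|$ simultaneously, i.e. by their minimum.

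There is no serious obstacle here; the only things to be careful about are the normalization convention that makes $|\alpha_{h}|,|\alpha_{v}|\le 1$ hold and the observation that one is really Kronecker-multiplying $1\times 1$ blocks, which is what turns the structural factorization into an exact scalar identity. The exact product form is the useful payoff: FP for the UCLA will then follow from FP for a single UCA (or a single ULA) by controlling just one of the two factors.
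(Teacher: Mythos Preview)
Your proof is correct and follows essentially the same route as the paper: substitute the Kronecker form, apply the conjugate-transpose and mixed-product identities for $\otimes$, and then use $|\alpha_{ih}|,|\alpha_{iv}|\le 1$ to get the min-bound. Your version is slightly more explicit (naming the identities and invoking Cauchy--Schwarz for the unit bounds), but the argument is the same.
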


We note that this result holds for any array satisfying the Kronecker product property in \eqref{eq.hi.Kron}, not only cylindrical (e.g. it also holds for a planar array) and it can be extended to Kronecker products of any number of vectors, not just 2.

We are now ready to establish the FP property for the UCLA.

\begin{prop}
\label{prop.UCLA}
Favorable propagation holds for the UCLA in the full 3-D case under LOS propagation for (any) fixed element spacings $d, d_v >0$,  any fixed number of users $M < \infty$ with  distinct and fixed AoA's in any of the two cases:
\bi
\item Case 1. $N\to \infty$ and,  for all $i\ge 2$, neither of the following holds :

$\quad$ (a) $\theta_{1,i} = 0$ or $\pi$;

$\quad$ (b) $\phi_1 = \phi_i$ and $\theta_1=\theta_i$ or $\theta_1=\pi-\theta_i$,

\item Case 2. $N_c\to \infty$, $\theta_1\neq \theta_i$ for any $i\ge2$, $d_v < \gl/2$.
\ei
\end{prop}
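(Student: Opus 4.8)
The plan is to reduce the UCLA problem to the already-established single-UCA and single-ULA results via the factorization in Proposition 2. Recall from that Proposition that $\alpha_i = \alpha_{ih}\alpha_{iv}$ and, crucially, $|\alpha_i|\le\min\{|\alpha_{ih}|,|\alpha_{iv}|\}$. Hence to show $\lim |\alpha_N| = 0$ (where now the limiting index is either $N$ or $N_c$) it suffices to control \emph{one} of the two factors, whichever corresponds to the array dimension being sent to infinity, and merely to note that the other factor stays bounded by $1$. This is why the Kronecker representation is "instrumental": it decouples the horizontal and vertical geometries entirely.

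For Case 1, I would send $N\to\infty$ with $N_c$ fixed and bound $|\alpha_i|\le|\alpha_{ih}|$. But $\alpha_{ih}$ is exactly the interference-leakage factor of a single UCA in the full 3-D setting, so Proposition 1 applies verbatim: under LOS, fixed $d>0$, fixed $M$, and distinct fixed AoAs, $|\alpha_{ih}|\to 0$ provided $\delta_i\neq 0$ for every $i\ge 2$ — and the negation of conditions (a) and (b) is precisely the statement that $\delta_i = \sqrt{\delta_{1i}^2+\delta_{2i}^2}\neq 0$ (condition (a) forces $\sin\theta_{1,i}$-type degeneracies that collapse the effective UCA radius seen by user $i$ to zero, while (b) is the case where users $1$ and $i$ have coincident horizontal projections so that $\delta_{1i}=\delta_{2i}=0$). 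Since there are finitely many users, $\min_{i\ge 2}\delta_i > 0$, so the convergence is uniform over users and summing finitely many vanishing terms yields $|\alpha_N|\to 0$. That completes Case 1 with essentially no new work beyond invoking Proposition 1.

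For Case 2, I would instead send $N_c\to\infty$ with $N$ fixed and bound $|\alpha_i|\le|\alpha_{iv}|$. Now $\alpha_{iv} = \frac{1}{N_c}\bh_{1v}^+\bh_{iv}$ is the leakage factor of a \emph{uniform linear array} of spacing $d_v$ with steering phases $2\pi(m-1)d_v\cos\theta$, so this is exactly the classical ULA-under-LOS computation: $\alpha_{iv}$ is a Dirichlet kernel, $\alpha_{iv} = \frac{1}{N_c}\sum_{m=0}^{N_c-1} e^{j2\pi m d_v(\cos\theta_i-\cos\theta_1)} = \frac{1}{N_c}\cdot\frac{e^{j2\pi N_c d_v\mu_i}-1}{e^{j2\pi d_v\mu_i}-1}$ with $\mu_i=\cos\theta_i-\cos\theta_1$, whose magnitude is $\frac{1}{N_c}\bigl|\sin(\pi N_c d_v\mu_i)/\sin(\pi d_v\mu_i)\bigr| \le \frac{1}{N_c}\bigl|\sin(\pi d_v\mu_i)\bigr|^{-1}$. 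The hypotheses $\theta_1\neq\theta_i$ and $d_v<\lambda/2$ (i.e. $d_v<1/2$ in wavelength-normalized units) guarantee $\mu_i\neq 0$ and $|d_v\mu_i|<1$, so $\sin(\pi d_v\mu_i)\neq 0$; the denominator is bounded away from zero and $|\alpha_{iv}|\to 0$ at rate $1/N_c$. Again finitely many users gives $|\alpha_N|\to 0$.

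The genuine content — and the only place where real analysis is needed — sits in Proposition 1, which is borrowed wholesale; within the present proof the main thing to get right is the bookkeeping that the negated conditions (a)–(b) are equivalent to $\delta_i\neq 0$, and that $d_v<\lambda/2$ with $\theta_1\neq\theta_i$ is exactly the non-degeneracy ensuring the ULA Dirichlet denominator does not vanish (this is the standard "no grating lobe" condition). I do not expect any serious obstacle: the whole proposition is a short corollary of Propositions 1 and 2 plus the elementary Dirichlet-kernel estimate.
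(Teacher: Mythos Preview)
Your proposal is correct and follows essentially the same route as the paper: use the Kronecker factorization from Proposition~2 to bound $|\alpha_i|\le\min\{|\alpha_{ih}|,|\alpha_{iv}|\}$, then invoke Proposition~\ref{prop.3D} for Case~1 and the ULA FP property for Case~2. Your write-up is actually more detailed than the paper's, which dispatches Case~2 in a single line by citing ``the FP property for the ULA'' without writing out the Dirichlet-kernel estimate you supply.
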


\begin{proof}
Consider case 1. Using \eqref{eq.hi.Kron} and \eqref{eq.xiMN.ineq}, one obtains:
\bal
\lim_{N\to \infty} |\alpha_i| \le \lim_{N\to \infty} |\alpha_{ih}| = 0
\eal
from which the desired result follows, where the equality is due to Proposition \ref{prop.3D}. Case 2 follows in a similar way (from the FP property for the ULA).
\end{proof}

Based on the above result, we conclude that, in order to achieve favorable propagation, one should
\begin{itemize}
\item expand UCLA horizontally ($N\to \infty$) if $\phi_1\neq\phi_i$ or if $\theta_1\neq\theta_i$ and $\theta_1+\theta_i\neq\pi$, or
\item expand UCLA vertically ($N_c\to \infty$) if $\theta_1 \neq \theta_i$, $d_v < \gl/2$.
\end{itemize}
Clearly, simultaneous expansion (both vertically and horizontally) is also possible to achieve the FP.

\section{Examples}

To illustrate the asymptotic convergence to favorable propagation, 
Fig. 4 and 5 show the behavior of $|\alpha_N|$ and $|\alpha_{2N}|$ for the UCA (the planar 2-D case) with $d=0.5$,  $M=10$ and 100,  as the number $N$ of antennas is increasing; $|\alpha_{2N}|$ also corresponds to 2-user case with $\phi_1=0,\ \phi_2=2\pi/M$. Note that the envelope of $\alpha$ is monotonically decreasing with $N$, i.e. the FP property holds, and that larger $M$ calls for larger $N$ to attain low $|\alpha|$, in accordance with (10) and (11).

These simulation results are representative (in the sense that similar behaviour also holds for other values of $M$ and $d$); they are consistent with the presented analytical results (when FP holds, the respective limit is zero so that arbitrary low $|\alpha_N|$ can be achieved if $N$ is large enough). Additionally, they are also consistent with empirical measurements in e.g. \cite{Hoydis-12}\cite{Gao-15}.

\begin{figure}[t]
\centerline{\includegraphics[width=4in]{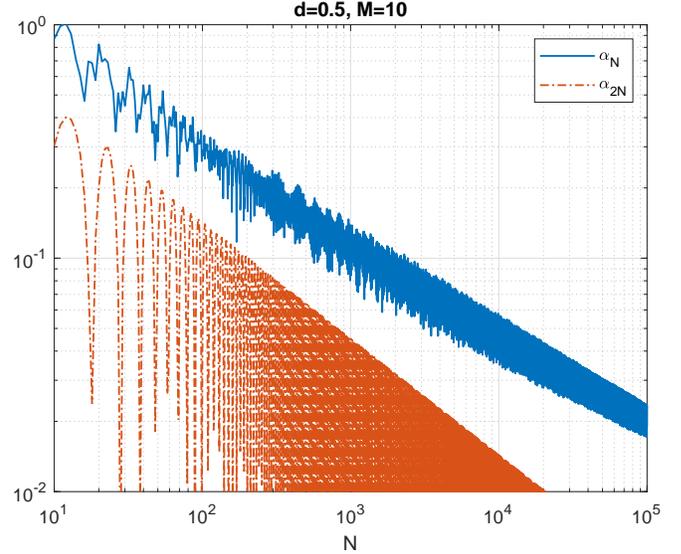}}
\caption{IUI factors $|\alpha_{2N}|$ and $|\alpha_N|$ vs. $N$ for $M=10$, $d=0.5$; uniform user AoAs $\phi_i = 2\pi(i-1)/M,\ i=1...M$.}
\label{fig.d05M10}
\end{figure}

\begin{figure}[t]
\centerline{\includegraphics[width=4in]{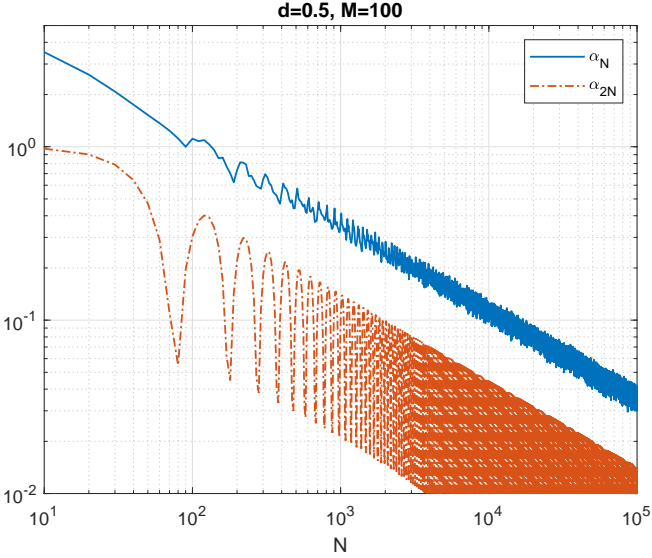}}
\caption{The same setting as in Fig. \ref{fig.d05M10} but with $M=100$ users. Note that larger $N$ (approx. 10 times) is needed to achieve the same low $\alpha$ compared to Fig. \ref{fig.d05M10}, in accordance with (10) and (11).}
\end{figure}

\section{Conclusion}
Favorable propagation is rigorously shown to hold in LOS environment for uniform circular and cylindrical arrays with (any) fixed antenna spacing if a given number $M$ of users is finite and their angles of arrival are fixed and distinct. Under these conditions, inter-user interference can be made as low as desired if the number $N$ of antennas is large enough. Based on the asymptotic analysis, a condition on $N$ to approach the FP closely is given. If either $M$ grows unbounded with $N$ or if the AoAs are allowed to approach each other, the FP does not hold in general but may hold in some special cases.

\section{Appendix: Proof of Theorem 1}

One can assume, without loss of generality (due to rotational symmetry), that $\phi_1=0$ and $\phi_i \neq 0,\ i\ge 2$, so that $\alpha_{iN}$ can be expressed as follows:
\bal
\label{eq.FP-2D-p-1}
\alpha_{iN} &= \frac{\bh_1^+\bh_i}{N }=\frac{1}{N}\sum_{n=0}^{N-1} e^{j2\pi R_N[\cos(\varphi_n - \phi_i) - \cos(\varphi_n)]}\\ \notag
&=\frac{1}{N}\sum_{n=0}^{N-1} e^{jz_{Ni}\sin(\varphi_n - \phi_i/2)}, \ z_{Ni}=4\pi R_N\sin(\phi_i/2)
\eal
where $\varphi_n = 2\pi n/N$. Note that this sum does not converge to an integral since $|z_{Ni}| \to \infty$ as $N \to \infty$ and hence the method of \cite{Chen-13} is not applicable.
To overcome this difficulty, we use a different approach, which is based on the generating function of the Bessel coefficients \cite[sec. 2.22]{Watson}:
\bal
e^{j x \sin(\phi_i)}=\sum_{m=-\infty}^{+\infty}e^{j m\phi_i}J_m(x)
\eal
where $J_m$ is Bessel function of 1st kind and order $m$, so that $\alpha_{iN}$ in \eqref{eq.FP-2D-p-1} can be expressed as follows:
\bal\notag
\alpha_{iN} &= \frac{1}{N} \sum_{m=-\infty}^{+\infty} e^{-j \phi_i m/2} J_{m}(z_{Ni}) \sum_{n=0}^{N-1} e^{j 2\pi m n /N}\\
\label{eq.FP-2D-p-32}
&= \sum_{k=-\infty}^{+\infty} e^{-j k N \phi_i/2} J_{k N}(z_{Ni})
\eal
where \eqref{eq.FP-2D-p-32} follows from
\bal
\sum_{n=0}^{N-1}e^{j 2\pi n m /N}=\left\{\begin{array}{ll}
N &\text{if}\ \ m=k N, \\
0 &\text{if}\ \ m \neq k N
\end{array}\right.
\eal
where $k=0,\pm 1,\pm 2, \cdots $. To facilitate limit evaluation, we use the following upper bound based on \eqref{eq.FP-2D-p-32} and the symmetry property $J_{-k}=(-1)^k J_k$:
\bal\notag
\label{eq.FP-2D-p-4}
|\alpha_{iN}| &\le 2 \sum_{k=0}^{k_1-1} |J_{k N}(z_{Ni})| + 2 \sum_{k=k_1}^{\infty} |J_{k N}(z_{Ni}) |\\
&= A_{1Ni} + A_{2Ni}
\eal
where we set $k_1=\lceil e\pi d\rceil$ (the usefulness of this choice will be clear later on), $\lceil x \rceil$ is the smallest integer greater or equal to $x$; $A_{1Ni},\ A_{2Ni}$ denote 1st and 2nd summation terms. Next, we obtain sufficiently-tight upper bounds on $A_{1Ni},\ A_{2Ni}$ and demonstrate that they converge to 0 as $N\to \infty$. To upper bound $A_{1Ni}$, use
\bal
z_{Ni} = 4\pi R_N\sin(\phi_i/2) = \frac{2Nd\sin(\phi_i/2)} {\operatorname{sinc}(1/N)}
\eal
where $\operatorname{sinc}(x)= \sin(\pi x)/(\pi x)$, and observe that
\bal
2/\pi \le \operatorname{sinc}(1/N) \le 1
\eal
so that
\bal
\label{eq.zN.LB}
2 N d |\sin(\phi_i/2)| \le |z_{Ni}| \le N\pi d
\eal
Combining \eqref{eq.zN.LB} with the following upper bound  \cite{Landau}\cite{Krasikov-06}
\bal
|J_k (x)|\le|x|^{-1/3}
\eal
one obtains
\bal
|J_{k N}(z_{Ni})|\le |z_{Ni}|^{-1/3}\le |2N d \sin(\phi_i/2)|^{-1/3}
\eal
and an upper bound on $A_{1Ni}$ follows:
\bal\notag
\label{eq.A1N.UB}
A_{1Ni} &\le 2\sum_{k=0}^{k_1-1}|2N d \sin(\phi_i/2)|^{-1/3}\\
 &=2 k_1 |2N d \sin(\phi_i/2)|^{-1/3}
\eal

Next, we obtain an upper bound on $A_{2Ni}$. To this end, we will need the following technical Lemma, which presents a novel upper bound on Bessel functions.

\begin{lemma}
\label{lemma.UB}
If $|x|\le 1$, then
\bal
\label{eq.Jn.UB}
|J_{n}(n x)| \le |x\cdot e/2|^n
\eal
\end{lemma}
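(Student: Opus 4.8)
The plan is to prove the bound $|J_n(nx)| \le |xe/2|^n$ for $|x| \le 1$ by starting from a standard integral or series representation of $J_n$ and then estimating crudely. The cleanest route I would take is via the power series
\bal\notag
J_n(y) = \sum_{\ell=0}^{\infty} \frac{(-1)^\ell}{\ell!\,(n+\ell)!}\left(\frac{y}{2}\right)^{n+2\ell},
\eal
so that, with $y = nx$,
\bal\notag
|J_n(nx)| \le \left(\frac{|x|}{2}\right)^n n^n \sum_{\ell=0}^{\infty} \frac{1}{\ell!\,(n+\ell)!}\left(\frac{n|x|}{2}\right)^{2\ell}
\le \left(\frac{n|x|}{2}\right)^n \sum_{\ell=0}^{\infty} \frac{1}{\ell!\,(n+\ell)!}\left(\frac{n}{2}\right)^{2\ell},
\eal
using $|x| \le 1$ in the tail. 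It then suffices to show the remaining $x$-free sum is at most $(e/n!)^n \cdot n!/n^n$, i.e. ultimately to control $\frac{n^n}{n!}$ by $e^n$ — which is exactly the elementary inequality $n! \ge (n/e)^n$ (immediate from $e^n = \sum_k n^k/k! \ge n^n/n!$).

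A slicker alternative I would actually prefer is to avoid the series tail entirely and use the bound $|J_n(y)| \le \frac{|y/2|^n}{n!}$ valid for all real $y$ when one keeps only the first term of the alternating series and checks the tail is dominated — but that inequality is not unconditionally true. So instead I would use the classical inequality $|J_n(y)| \le \frac{|y|^n}{2^n n!}$ which does hold for $|y| \le 2\sqrt{n+1}$ (the series is alternating with decreasing terms there), and note that $|y| = n|x| \le n \le 2\sqrt{n+1}$ for all $n \ge 1$, so this regime always applies. Then
\bal\notag
|J_n(nx)| \le \frac{(n|x|)^n}{2^n n!} = \frac{n^n}{n!}\left(\frac{|x|}{2}\right)^n \le e^n \left(\frac{|x|}{2}\right)^n = \left|\frac{xe}{2}\right|^n,
\eal
where the last inequality is again $n! \ge (n/e)^n$. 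The $n=0$ case is trivial since $|J_0| \le 1 = |xe/2|^0$.

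The main obstacle is making sure the monotonicity/alternating-series argument that licenses $|J_n(y)| \le |y|^n/(2^n n!)$ is genuinely valid on the range we need. The terms of the series are $a_\ell = \frac{(y/2)^{2\ell}}{\ell!(n+\ell)!}$ (up to the common factor $(y/2)^n$), and $a_{\ell+1}/a_\ell = \frac{(y/2)^2}{(\ell+1)(n+\ell+1)}$, which is $\le 1$ for all $\ell \ge 0$ precisely when $(y/2)^2 \le n+1$, i.e. $|y| \le 2\sqrt{n+1}$; since $|y| = n|x| \le n$ and $n \le 2\sqrt{n+1}$ holds for every integer $n\ge1$, the alternating series theorem applies and the partial sum after the first term bounds the whole. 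Everything else — pulling out $(|x|/2)^n$, invoking $|x|\le1$, and the estimate $n^n/n! \le e^n$ — is routine. I would present it in the second, shorter form to keep the argument self-contained and elementary.
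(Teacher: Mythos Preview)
Your second (preferred) argument contains a genuine error: the claim that $n \le 2\sqrt{n+1}$ holds for every integer $n\ge 1$ is false. Squaring, this is $n^2 \le 4(n+1)$, i.e.\ $n^2-4n-4\le 0$, which holds only for $n\le 2+2\sqrt{2}\approx 4.83$; for $n\ge 5$ one has $n>2\sqrt{n+1}$ (e.g.\ $5^2=25>24=4\cdot 6$). Hence for $n\ge 5$ and $|x|$ close to $1$ the argument $|y|=n|x|$ can exceed $2\sqrt{n+1}$, the alternating-series terms are \emph{not} monotone from the start, and the bound $|J_n(y)|\le |y/2|^n/n!$ is not justified on the range you need. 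In fact that bound is false there: taking absolute values in the series gives $|J_n(n)|\le (n/2)^n\cdot\frac{1}{n!}\cdot\big(\text{tail}\big)$ with the tail equal to $(2/n)^n I_n(n)$, and since $I_n(n)$ grows like $e^{cn}$ with $c=\sqrt{2}-\ln(1+\sqrt{2})\approx 0.53>\ln(e/2)\approx 0.31$, your first sketched route (replacing $|x|$ by $1$ in the tail and hoping the sum is $\le (e/n)^n$) also cannot close.

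The paper avoids the power series entirely and instead invokes Kapteyn's inequality
\[
|J_n(nx)| \le \left(\frac{|x|\,e^{\sqrt{1-x^2}}}{1+\sqrt{1-x^2}}\right)^{n},\qquad |x|\le 1,
\]
and then observes that $\max_{|t|\le 1} e^{\sqrt{1-t^2}}/(1+\sqrt{1-t^2})=e/2$ (attained at $t=0$), which immediately yields $|J_n(nx)|\le |xe/2|^n$. If you want to salvage an elementary argument, you would need a replacement for the step $|J_n(y)|\le |y/2|^n/n!$ that is valid for all $|y|\le n$, not just $|y|\le 2\sqrt{n+1}$; Kapteyn's bound is precisely the sharp tool for this regime.
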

\begin{proof}
via the following inequalities:
\bal\notag
\label{eq.Jn.UB.p1}
|J_n(n x)| &\le \left|\frac{x\cdot \exp\left\{\sqrt{ 1-x^2}\right\}}{1+\sqrt{1-x^2}}\right|^n\\
 &\le |x|^{n} \cdot \max _{|t|\le 1}\left(\frac{e^{\sqrt{1-t^2}}}{1+\sqrt{1-t^2}}\right)^{n}\\ \notag
  &\le |e\cdot x/2|^n
\eal
where the first inequality in \eqref{eq.Jn.UB.p1} is
Kapteyn's inequality \cite[sec.~8.7]{Watson} and the last inequality is due to the following:
\bal
\max _{|t|\le1}\frac{e^{\sqrt{1-t^2}}}{1+\sqrt{1-t^2}} =\frac{e}{2}
\eal
\end{proof}

Using \eqref{eq.Jn.UB} and  \eqref{eq.zN.LB}, one obtains
\bal
|J_{k N}(z_{Ni})| \le \left(\frac{e \pi d}{2 k}\right)^{k N}
\eal
so that $A_{2Ni}$ can be upper bounded as follows:
\bal\notag
\label{eq.A2N.UB}
A_{2Ni} & \le 2\sum_{k=k_1}^{\infty}\left(\frac{e \pi d}{2 k}\right)^{k N} \\ \notag
&\le 2\sum_{k=k_{1}}^{\infty} q^{N k}, \quad q=\left(\frac{e \pi d}{2 k_{1}}\right)\\
&\le 2\sum_{k=k_{1}}^{\infty} (1/2)^{N k}= \frac{2 (1/2)^{N k_1}}{1-(1/2)^{N}}
\eal
Using \eqref{eq.A1N.UB} and \eqref{eq.A2N.UB}, one finally obtains
\bal
\label{eq.alpaN.UB}
\lim _{N \to \infty} &|\alpha_{iN}| \le \lim_{N \to \infty} (A_{1Ni}+A_{2Ni})\\ \notag
 &\le \lim_{N \to \infty} \frac{2 \lceil e\pi d\rceil}{|2N d \sin(\phi_i/2)|^{1/3}} +\lim _{N \to \infty}\frac{2 (1/2)^{N \lceil e\pi d\rceil}}{1-(1/2)^{N}}=0
\eal
and therefore $\lim _{N \to \infty} |\alpha_{iN}| =0$, which implies $\lim _{N \to \infty} |\alpha_{N}| =0$, as required.



\end{document}